\newtheorem{theorem}{Theorem}[section]
\newtheorem{lemma}[theorem]{Lemma}
\newtheorem{prop}[theorem]{Proposition}
\theoremstyle{definition}
\theoremstyle{remark}
\numberwithin{equation}{section}
\newcommand{\caA}{{\mathcal A}}
\newcommand{\caB}{{\mathcal B}}
\newcommand{\caF}{{\mathcal F}}
\newcommand{\caH}{{\mathcal H}}
\newcommand{\caI}{{\mathcal I}}
\newcommand{\caJ}{{\mathcal J}}
\newcommand{\caO}{{\mathcal O}}
\newcommand{\bbN}{{\mathbb N}}
\newcommand{\bbR}{{\mathbb R}}
\newcommand{\bbS}{{\mathbb S}}
\newcommand{\ie}{{\it i.e.\/} }
\newcommand{\eg}{{\it e.g.\/} }
\newcommand{\iu}{\mathrm{i}}
\newcommand{\str}{^{*}}
\newcommand{\ep}[1]{\mathrm{e}^{#1}}
\newcommand{\Tr}{\mathrm{Tr}}
\newcommand{\be}{\begin{equation}}
\newcommand{\ee}{\end{equation}}
\newcommand{\bea}{\begin{eqnarray}}
\newcommand{\eea}{\end{eqnarray}}
\newcommand{\beann}{\begin{eqnarray*}}
\newcommand{\eeann}{\end{eqnarray*}}
\begin{document}

\title{The adiabatic theorem in a quantum many-body setting}

\author{Sven Bachmann}
\address{Department of Mathematics, The University of British Columbia, Vancouver, BC V6T 1Z2, Canada}
\email{sbach@math.ubc.ca}

\author{Wojciech De Roeck}
\address{Instituut Theoretische Fysica, KULeuven, 3001 Leuven, Belgium}
\email{wojciech.deroeck@kuleuven.be}

\author{Martin Fraas}
\address{Department of Mathematics, Virginia Tech, Blacksburg, VA 24061-0123, USA}
\email{fraas@vt.edu}

\date{\today }

\begin{abstract}
In these lecture notes, we review the adiabatic theorem in quantum mechanics, focusing on a recent extension to many-body systems. The role of locality is emphasized and the relation to the quasi-adiabatic flow discussed. An important application of these results to linear response theory is also reviewed.
\end{abstract}

\maketitle

\section{The adiabatic principle}\label{sec:intro}

The story of the adiabatic theorem in quantum theory is long and rich and these notes are no attempt at an overview of the subject. Let us only point out that the first proof of an adiabatic theorem under the assumption of a spectral gap is due to Fock and Born~\cite{BornFock} rather shortly after quantum theory became a somewhat mature subject, and that it is Kato~\cite{Kato50} who took the matter to a modern mathematical theory. The parallel transport that shall be extensively discussed here was in particular introduced in his famous paper. For a simple proof in a general gapped setting, we further refer to~\cite{NenciuBasic}. 

Among the many refinements and extensions of the basic theorem, the adiabatic expansion introduced in~\cite{Garrido} and reinvigorated in~\cite{BerrySuper} will be particularly relevant for the purpose of the present note. In fact, the proof in the many-body setting that we shall discuss here owes much to the idea of an adiabatic expansion, with the additional ingredient that a special attention is given to `locality' at each step of the iterative procedure.

The adiabatic principle we discuss here applies to the linear dynamics in a Banach space governed by a slowly varying generator:
\begin{equation}\label{Banach}
\epsilon \dot \psi_\epsilon(s) = L_{\sigma_s}\psi_\epsilon(s),\qquad \psi_\epsilon(0) = \psi_0,
\end{equation}
where $s\mapsto\sigma_s$ is a given dependence of a parameter $\sigma$ on time. The starting point is the assumption that for each $\sigma$ the space decomposes into the decaying and the stationary part of the `frozen' dynamics by a projection $P_\sigma$.  Let $\phi_\sigma(t)$ be the solution of the dynamics generated by  $L_\sigma$ with $\sigma$ fixed, namely $\frac{d}{dt}\phi_\sigma(t) = L_{\sigma}\phi_\sigma(t)$; one assumes that $P_\sigma \phi_\sigma(t)$ is constant and  $(1- P_\sigma) \phi_\sigma(t)\rightharpoonup 0$ as $t\to\infty$ in an appropriate sense. 
The adiabatic principle states that if the initial condition $\psi_0$ of the driven dynamics is given by the fixpoint of $L_{\sigma_0}$, namely $(1-P_{\sigma_0})\psi_\epsilon(0) = 0$, then the solution of~(\ref{Banach}) remains close to the instantaneous fixpoint
\begin{equation*}
\left\Vert (1-P_{\sigma_s})\psi_\epsilon(s)\right\Vert = o(1)\qquad (\epsilon\to0)
\end{equation*}
for all $s$ in a compact interval. 

For concreteness, let us consider the example of a smooth family of bounded Hamiltonians $\sigma\mapsto H_\sigma$ on a Hilbert space $\caH$, where $\sigma$ is a parameter such as an external (\ie not dynamical) magnetic field. Assume that $H_\sigma$ has a unique ground state $\varphi_\sigma$ for all $\sigma\in[0,1]$ and that the ground state energy is uniformly isolated from the rest of the spectrum. The driven dynamics is given by letting $\sigma = \epsilon t$ in Schr\"odinger's equation $\iu\frac{d}{dt}\psi(t) = H_{\epsilon t}\psi(t)$ with initial condition $\psi(0) = \varphi_0$. The `slow driving' assumption is implemented by rescaling the time variable $s = \epsilon t$ where the driving rate $\epsilon$ is very small, yielding 
\begin{equation*}
\iu \epsilon \frac{d}{ds}\psi_\epsilon (s) =  H_s \psi_\epsilon (s),\qquad \psi_\epsilon(0) = \varphi_0,
\end{equation*}
where $\psi_\epsilon(s) = \psi(s/\epsilon)$, see~(\ref{Banach}). The adiabatic theorem in the present gapped setting then states that 
\begin{equation*}
\left\Vert \psi_\epsilon (s) - \langle \varphi_s,\psi_\epsilon (s)\rangle \varphi_s \right\Vert \leq C\epsilon,
\end{equation*}
where the constant $C$ naturally depends on the gap. In other words, the dynamically evolved $\psi_\epsilon (s)$ is close to the instantaneous ground state $\ep{\iu\theta_\epsilon(s)}\varphi_s$, up to a possible phase that will be discussed further below.

Note that the physical time involved here is
\begin{equation*}
t\sim\epsilon^{-1}\gg1.
\end{equation*}
The need to control the dynamics over such very long time scales is at the heart of the technical difficulties encountered in making the adiabatic principle into a theorem. 

A concise treatment of adiabatic theory covering both gapped and gapless cases as well as Lindblad evolution and Markov processes can be found in \cite{AFGG}. It however treats the adiabatic principle in the above Banach space setting which, as we shall now explain, is ill-suited for many-body applications.

\section{Many-body issues}\label{sec:MB}

Let us consider the concrete and simple case of a chain of $2L+1$ non-interacting spin-$\frac{1}{2}$'s. The driving is given by a uniform magnetic field whose intensity is constant but whose orientation slowly varies:
\begin{equation*}
H_{h_s,L} = -\sum_{x=-L}^L h_s\cdot \sigma_x,
\end{equation*}
where $s\mapsto h_s\in\bbS^2$, and $\sigma_x$ is the vector of three Pauli matrices at site $x$. Note that although the time evolution is not a true many-body dynamics since the state remains a product for all times, it does exhibit the typical large volume problem. The adiabatic evolution of each individual spin is simple: if it starts aligned with the magnetic field, then under slow driving, the spin will slowly follow the orientation of $h_s$, up to an error which is small in the driving rate.  However, although the state of each spin $\psi_{\epsilon,x}(s)$ is indeed close to the instantaneous ground state $\varphi_{h_s,x}$, say $\vert \langle\psi_{\epsilon,x}(s),\varphi_{h_s,x}\rangle\vert \geq C(1-\epsilon)$, the many-body state is almost orthogonal to the many-body ground state as soon as the number of spins is large as compared to the driving rate:
\begin{equation*}
\langle \otimes_{x=-L}^L\psi_{\epsilon,x}(s), \otimes_{x=-L}^L \varphi_{h_s,x} \rangle = \prod_{x=-L}^L\langle \psi_{\epsilon,x}(s), \varphi_{h_s,x}\rangle \sim (1-\epsilon)^{2L+1}.
\end{equation*}
This phenomenon is well-known as the `orthogonality catastrophe' that plagues the analysis of systems in the thermodynamic limit, see~\cite{lychkovskiy2017time} for a discussion in the adiabatic context.

The example clearly shows that the adiabatic principle as formulated so far for the state just cannot hold. But when it comes to many-body systems, local observables tend to be better-behaved objects than Hilbert space vectors, simply because they probe only a finite part of the state. And indeed, we will show that the expectation value of any local observable in the dynamically evolved state is close to its ground state expectation value. Not surprisingly, the error bound depends on the support of the observable. For the reader looking for analogies in analysis, this is really as simple as going from the $l^1$-topology to $l^\infty$, or more closely from the uniform operator topology to the weak operator topology.

Although this is compelling and indeed true, the situation is that of a child trying to stop a stream by putting a rock in the middle of it: the water leaks around it. The real many-body issue does not appear in the above example because there are no interactions there and hence no dynamical propagation. Luckily for the child, there is also clay to support the rock and ensure that the leaks can be controlled for times long enough that a peaceful puddle will be created behind the dam.

In order to clarify further the issue at stake, let us provide a simple derivation of the adiabatic theorem in quantum mechanics in the presence of a spectral gap. Let $P_s$ be the projection onto the ground state space of $H_s$ and let $\rho_\epsilon(s)$ be its adiabatic evolution starting from $P_0$, namely
\begin{equation*}
\epsilon\dot\rho_\epsilon(s) = -\iu[H_s,\rho_\epsilon(s)],\qquad \rho_\epsilon(0)=P_0.
\end{equation*}
We assume that the driving is compactly supported so that $\dot P\vert_{s=0} = \dot P\vert_{s=1} = 0$. We denote
\begin{equation*}
L_s(\cdot) = -\iu[H_s,\cdot]
\end{equation*}
and let $\sigma_\epsilon^{s,s'}(\cdot)$ be the flow generated by $L_s$, namely $\sigma_\epsilon^{s,0}(P_0) = \rho_\epsilon(s)$. Since $L_s(P_s) = 0$, the diabatic error $r_\epsilon(s) = \rho_\epsilon(s) - P_s$ is a solution of the initial value problem
\begin{equation*}
\left(\epsilon\frac{d}{ds} - L_s\right)r_\epsilon(s) = -\epsilon \dot P_s,\qquad r_\epsilon(0) = 0.
\end{equation*}
Hence it can be expressed using Duhamel's principle as
\begin{equation*}
r_\epsilon(s) = -\int_0^s\sigma_\epsilon^{s,s'}(\dot P_{s'})ds'.
\end{equation*}
In order to `integrate by parts', we note that $\epsilon \frac{d}{ds'}\sigma_\epsilon^{s,s'}(Q) = - \sigma_\epsilon^{s,s'}(L_{s'}(Q))$ for any $Q$ to conclude that
\begin{equation}\label{rest}
\rho_\epsilon(1) - P_1 = - \epsilon \int_0^1\sigma_\epsilon^{1,s}\left(\frac{d}{ds}L_s^{-1}(\dot P_s)\right)ds.
\end{equation}
Since $\dot{P}_s$ is off-diagonal, see~(\ref{Pod}), the inverse of $L_s$ is well defined. This concludes this sketch of proof of the adiabatic theorem under sufficient smoothness assumptions on $s\mapsto H_s$, since the Schr\"odinger flow $\sigma_\epsilon^{1,s}$ is norm preserving. 

Let us first understand how~(\ref{rest}) connects to the infrared catastrophe. If $P_s$ is the projection onto a spectral patch $\Sigma_s\subset[a,b]$ that is isolated from the rest of the spectrum by a gap $\gamma>0$, then Riesz' formula 
\begin{equation*}
P_s = -\frac{1}{2\pi\iu}\int_\Gamma (H_s - z)^{-1} dz
\end{equation*}
and the spectral theorem imply that
\begin{equation*}
P_s\dot P_s(1-P_s) = \frac{1}{2\pi\iu}\int_\Gamma\Big(\int_a^b \frac{1}{\lambda - z} dP_s(\lambda)\Big)\dot H_s \Big( \int_{\bbR\setminus[a,b]}\frac{1}{\mu - z} dP_s(\mu)\Big) dz.
\end{equation*}
With $\Gamma$ being a simple closed contour winding once around the patch $\Sigma_s$, the only pole contributing to the integral is at $\lambda$ so that
\begin{equation*}
P_s\dot P_s(1-P_s) = \int_a^b\int_{\bbR\setminus[a,b]} \frac{1}{\lambda - \mu} dP_s(\lambda)\dot H_sdP_s(\mu).
\end{equation*}
By the gap condition, $\vert \lambda - \mu\vert\geq \frac{\gamma}{2}$ and hence
\begin{equation*}
\Vert \dot P_s \Vert \leq C \frac{\Vert \dot H_s\Vert}{\gamma},
\end{equation*}
since
\begin{equation}\label{Pod}
\dot P_s = P_s\dot P_s(1-P_s) + (1-P_s)\dot P_sP_s
\end{equation}
because $P_s = P_s^2$. If the change in the Hamiltonian is truly extensive, then $\Vert \dot H_s\Vert$ is of order of the volume $V$ (or equivalently of the number of particles) and~(\ref{rest}) is meaningful only in the `few-body' regime $\epsilon V \ll 1$.

This being a norm estimate, the collapse of the argument may be circumvented by considering a weaker topology, namely expectation values of local operators, as discussed above. However, expectation values are not invariant under unitary conjugation. This invariance was however crucial in the norm estimate of~(\ref{rest}) because it allowed us to ignore the long-time dynamics $\sigma_\epsilon^{1,s}$. In fact, for any local observable $A$,
\begin{equation*}
\Tr\left(A\sigma_\epsilon^{1,s}(\frac{d}{ds}L_s^{-1}(\dot P_s))\right) = \Tr\left(\tau_\epsilon^{1,s} (A) \frac{d}{ds} (L_s^{-1}(\dot P_s))\right)
\end{equation*}
 and the norm of the operators inside the trace must be expected to grow with the support of the Heisenberg evolved $\tau_\epsilon^{1,s} (A)$. By the Lieb-Robinson bound~\cite{Lieb:1972ts,Nachtergaele:2006fd}, this dynamics is in general ballistic: the linear size of the support is proportional to time. The adiabatic dynamics running for long times of order $\epsilon^{-1}$, the bound to be expected in a generic many-body situation is of order $\epsilon^{-d}$ in $d$ spatial dimensions. This  yields a naive diabatic error of order $\epsilon^{1-d}$, see~(\ref{rest}) which is useful only if $d=0$, the one-body limit of many-body quantum dynamics.

In bypassing the infinite volume infrared problem and concentrating on the long time issue, it is clear how to progress: Under sufficient smoothness of the Hamiltonian, the linear estimate~(\ref{rest}) can indeed be improved to a bound of order $\epsilon^n$ and even exponential if the Hamiltonian is infinitely differentiable, see~\cite{Nenciu}.
Hence, the challenge is to construct an adiabatic expansion inspired by these works to order $n\geq d+1$, which moreover respects the locality of the dynamics.

In the rest of these notes, we provide the arguments necessary to carry out the program just sketched. Although we skip some of the more tedious technical estimates, we provide all the necessary arguments, and refer the interested reader to the original~\cite{OurMathAdiabatic} for details in the setting of quantum spin systems. We also note that the same ideas, but with the necessary technical modifications, have successfully been applied to lattice fermions in~\cite{TeufelAd} with similar results.

\section{Locality and parallel transport}\label{sec:Locality}

The argument leading to~(\ref{rest}) uses $s\mapsto P_s$ as a given map without further structure. As Kato first noticed, it is useful in the context of adiabatic theory to cast it as a flow
\begin{equation}\label{Flow}
P_s = U^{\mathrm{A}}(s) P_0 U^{\mathrm{A}}(s)\str,\qquad U^{\mathrm{A}}(0) = 1,
\end{equation}
and to compare $U^{\mathrm{A}}(s)$ with the Schr\"odinger propagator $U_\epsilon(s,0)$, instead of comparing directly the projectors $P_s$ and $\rho_\epsilon(s)$.

Let $\Omega_s$ be a unit vector in the range of $P_s$, namely $P_s\Omega_s = \Omega_s$. Then
\begin{equation}\label{transport}
(1-P_s)\dot\Omega_s = \dot P_s\Omega_s 
\end{equation}
which shows that the constraint of having a smooth family of states belonging to $\mathrm{Ran}(P_s)$ at each $s$ actually leaves the motion of $\Omega_s$ within $\mathrm{Ran}(P_s)$ undetermined. Note that this remark remains relevant even in the case of a non-degenerate eigenvalue as the phase of $\Omega_s$ is undetermined. A natural choice is the following parallel transport condition
\begin{equation}\label{PT}
P_s\dot\Omega_s = 0,
\end{equation}
which imposes that there is no motion at all within $\mathrm{Ran}(P_s)$ as $s$ changes. Adding this and (\ref{transport}) yields $\dot\Omega_s = \dot P_s \Omega_s$ and with a little more algebra using~(\ref{Pod}),
\begin{equation*}
\dot\Omega_s = (1-P_s)\dot P_s P_s \Omega_s = [\dot P_s, P_s]\Omega_s.
\end{equation*}
This is Kato's choice of a generator for the flow, namely
\begin{equation*}
\Omega_s = U^{\mathrm{K}}(s)\Omega_0,
\end{equation*}
where
\begin{equation*}
\iu \dot U^{\mathrm{K}}(s) = \iu[\dot P_s, P_s] U^{\mathrm{K}}(s).
\end{equation*}

We emphasize at this point that there are really two choices that have been made, namely first the choice of map $s\mapsto\Omega_s$ within $\mathrm{Ran}(P_s)$, and second the choice of unitary implementing this map. It is also to be noted that Kato's generator $\iu[\dot P_s, P_s]$ has a priori no reason to be a local Hamiltonian and the corresponding dynamics has no reason to satisfy a Lieb-Robinson bound. Since it is however a fact that the driven Schr\"odinger propagator is local, it is immediately clear that $U^{\text{K}}(s)$ may not be a good choice to approximate the Schr\"odinger propagator in the many-body context.

The differential version of~(\ref{Flow}) reads
\begin{equation}\label{Generator}
\dot P_s = -\iu [G_s^{\text{A}}, P_s],
\end{equation}
where
\begin{equation*}
G_s^{\text{A}} := \iu \dot U^{\text{A}}(s)U^{\text{A}}(s)\str
\end{equation*}
is the generator of $U^{\text{A}}(s)$. 
Let now $W\in L^1(\bbR;\bbR)$ and 
\begin{equation}\label{GH}
G_s^{\text{H}} := \int_\bbR W(t)\ep{\iu t H_s}\dot H_s\ep{-\iu t H_s} d t,
\end{equation}
which is a norm-convergent integral. Since $[H_s,P_s] = 0$, we see that $[\dot H_s ,P_s ] = [\dot P_s,H_s]$ and hence
\begin{align}
-\iu[G_s^{\text{H}},P_s] &= -\iu\int_\bbR W(t)\ep{\iu t H_s}[\dot P_s,H_s]\ep{-\iu t H_s} d t \label{[GP]}\\
&=-\iu\sqrt{2\pi}\int_{\bbR^2} \widehat{W}(\mu-\lambda) (\mu-\lambda) dP_s(\lambda)\dot P_s dP_s(\mu),\nonumber
\end{align}
where $\widehat W$ denotes the Fourier transform of $W$. Since $\dot P_s$ is off-diagonal, see~(\ref{Pod}), the integrand is in fact supported only on the set
\begin{equation*}
\{(\mu,\lambda)\in \mathrm{Spec}(H_s)\times \mathrm{Spec}(H_s): \vert \mu - \lambda\vert \geq \gamma_s\}
\end{equation*}
where $\gamma_s>0$ is the spectral gap of $H_s$. One concludes that $G_s^{\text{H}}$ solves~(\ref{Generator}) for all $s$ if and only if
\begin{equation}\label{FourierW}
\widehat{W}(\xi) = \frac{\iu}{\sqrt{2\pi}\xi},\qquad \vert\xi\vert\geq \inf_s\gamma_s.
\end{equation}
Note that the non-integrable decay of $\widehat W$ at infinity hints at a discontinuity of $W$. This could be seen informally by observing that~(\ref{[GP]}) can also be written as 
\begin{equation*}
-\iu[G_s^{\text{H}},P_s] = \int_\bbR W(t)\frac{d}{dt}\ep{\iu t H_s}\dot P_s\ep{-\iu t H_s} d t
\end{equation*}
which yields $\dot P_s$ upon integration by parts if $W' = - \delta$.

Since $H_s$ is a sum of local terms (see~\cite{PieterQSS} as well as Appendix~\ref{appendix:Local} for details on the setting of quantum lattice systems),
\begin{equation}\label{Hamiltonian}
H_s = \sum_{X} \Phi_s(X),
\end{equation}
where $\Phi_s(X)$ acts only on the subset $X$, so is $\dot H_s$. By the Lieb-Robinson bound, $\ep{\iu t H_s}\dot  \Phi_s(X) \ep{-\iu t H_s}$ remains an almost local operator for times of order $1$. Hence, for $G_s^{\text{H}}$ defined in~(\ref{GH}) to be a sum of local terms, it suffices that $W$ is a function of fast decay. In turn, if this holds, then the dynamics generated by $G_s^{\text{H}}$ satisfies a Lieb-Robinson bound.

Finally, if the spectral patch $\Sigma_s$ is in fact a single point, then
\begin{equation*}
P_s G_s^{\text{H}} P_s = \sqrt{2\pi} \, \widehat W(0) P_s \dot H_s P_s,
\end{equation*}
and $\widehat W(0) = 0$ implies that the vector $U^{\mathrm{H}}(s)\Omega_0\in\mathrm{Ran}(P_s)$ satisfies the parallel transport condition~(\ref{PT}) since
\begin{equation*}
P_s\frac{d}{ds}U^{\mathrm{H}}(s)\Omega_0 = -\iu P_s G_s^{\text{H}} U^{\mathrm{H}}(s)\Omega_0.
\end{equation*}
The condition $\widehat W(0) = 0$ is satisfied in particular if $W$ is an odd function.

Let us add a few words about $W$. As done in~\cite{automorphic}, $W$ can be constructed as the antiderivative of $w$ whose Fourier transform vanishes outside a finite interval. If $w(t) = \caO(\exp(-\vert t\vert \xi))$ for a $\xi>0$ at $t\to\pm\infty$, then its Fourier transform is a holomorphic function, which must be identically zero if it vanishes outside an interval. While exponential decay is therefore impossible to achieve, almost any subexponential decay is, see~\cite{InghamFT}. Precisely, if $\xi$ is a continuous positive function such that 
\begin{equation*}
\int^\infty\frac{\xi(t)}{t}dt<\infty,
\end{equation*}
then there exists a compactly supported function $\chi$ such that 
\begin{equation*}
\check \chi(t) = \caO(\ep{-\vert t\vert\xi(\vert t\vert)}).
\end{equation*}

The generator $G_s^{\text{H}}$ was first proposed in a slightly different fashion by Hastings in~\cite{Hastings:2004go} and further developed in~\cite{HastingsWen}. The present form including an explicit function satisfying all three conditions above with a subexponential decay at infinity was introduced in~\cite{automorphic}.

Summarizing, the flow generated by the local Hamiltonian $G_s^{\text{H}}$ with $\widehat W(0) = 0$ is exactly the same as Kato's flow within $\mathrm{Ran}(P_s)$, but the two differ otherwise on $\caH$ in such a way that Hastings' flow is local in the sense that it satisfies a Lieb-Robinson bound. We shall follow the general agreement and call $A\mapsto U^{\text{H}}(s)\str A U^{\text{H}}(s)$ the `quasi-adiabatic flow'.

\section{Local adiabatic expansion}\label{sec:Expansion}

With this local version of $U^{\mathrm{A}}(s)$ in hand, we turn to the heart of the argument. The ideas presented in this section, which collectively go under the name of superadiabatic expansion, can be traced back to Garrido~\cite{Garrido}, and developed by Berry~\cite{Berry90} and later~\cite{Nenciu}. We also mention~\cite{JoyePfisterSuper} which explains the iteration scheme in a language that is very close to the one used here. The novelty in the many-body setting is to carry out the construction so as to keep locality at all orders. The result thereof is the many-body adiabatic theorem of~\cite{OurMathAdiabatic}.

The theorem below is set on an arbitrary but finite subset $\Lambda$. As we discussed in Section~\ref{sec:MB}, the key however is to obtain bounds that are independent of the volume $\vert \Lambda\vert$. If not otherwise specified, all bounds are uniform in $\Lambda$.

Let $\psi_\epsilon(s) = U_\epsilon(s,0)\Omega_0$ be the solution of the driven Schr\"odinger equation for a Hamiltonian of the form~(\ref{Hamiltonian}) defined on finite subsets $\Lambda$ of an infinite lattice, and let $\Omega_s$ be the parallel transported ground state, see~(\ref{PT}). Let $d$ be the spatial dimension of the lattice, defined by the rate of growth of balls. The Hamiltonian is of the form~(\ref{Hamiltonian}) where the interaction terms $\Phi_s(X)$, together with their derivatives, satisfy sufficient decay conditions in the size of $X$, see Assumption~2.2 in~\cite{OurMathAdiabatic}. In fact, the original article just cited is mistaken on the order of differentiability of the Hamiltonian that is needed.
\begin{theorem}\label{thm:MBA}
Let $H\in C^{d+2}([0,1])$ be a local quantum spin Hamiltonian with ground state energy $E_s$ isolated from the rest of the spectrum, uniformly in $s\in[0,1]$ and $\Lambda$. Assume that all derivatives of $H$ vanish at $s=0$. Then for any $A\in\caA^{\mathrm{loc}}$, 
\begin{equation*}
\left\vert \langle\psi_\epsilon(1),A\psi_\epsilon(1)\rangle 
- \langle\Omega_1,A\Omega_1\rangle\right\vert \leq C_1 \epsilon. 
\end{equation*}
If $H\in C^{d+1+m}([0,1])$ for $m\geq 1$ and its derivative is compactly supported in $(0,1)$, then
\begin{equation*}
\left\vert \langle\psi_\epsilon(1),A\psi_\epsilon(1)\rangle 
- \langle\Omega_1,A\Omega_1\rangle\right\vert \leq C_m \epsilon^m. 
\end{equation*}
All constants $C_j$ are independent of $\Lambda$.
\end{theorem}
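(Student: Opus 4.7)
The plan is to implement a superadiabatic expansion in the spirit of Garrido--Nenciu, with the crucial modification that every object produced at every order must be quasi-local with a controlled decay profile. The central tool is the Hastings integral
\begin{equation*}
\caI_s(X) := \int_\bbR W(t)\,\ep{\iu t H_s}\, X\, \ep{-\iu t H_s}\,dt
\end{equation*}
from Section~\ref{sec:Locality}: applied to off-diagonal operators it provides a quasi-local right-inverse of $[H_s,\cdot]$ by the same spectral computation as in (\ref{[GP]}), with quasi-locality inherited from the Lieb-Robinson bound for the frozen dynamics $\ep{\iu t H_s}$ combined with the subexponential decay of $W$.

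Using $\caI_s$, I iteratively build corrections $P_s^{(n)} = P_s + \sum_{k=1}^n \epsilon^k B_s^{(k)}$ with each $B_s^{(k)}$ off-diagonal and quasi-local, chosen so that
\begin{equation*}
\iu\epsilon\,\frac{d}{ds}P_s^{(n)} = [H_s, P_s^{(n)}] + \epsilon^{n+1} R_s^{(n)}
\end{equation*}
for a quasi-local remainder $R_s^{(n)}$. The recursion reduces at each order to an equation $[H_s, B_s^{(k+1)}] = \iu X_s^{(k)}$ where $X_s^{(k)}$ is an explicit off-diagonal quasi-local operator built from $\dot H_s$ and the $B_s^{(j)}$ with $j\leq k$, solved by $B_s^{(k+1)} = -\caI_s(X_s^{(k)})$. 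Duhamel's principle then yields
\begin{equation*}
\rho_\epsilon(1) - P_1^{(n)} = \sigma_\epsilon^{1,0}\bigl(P_0 - P_0^{(n)}\bigr) + \iu\epsilon^n \int_0^1 \sigma_\epsilon^{1,s}\bigl(R_s^{(n)}\bigr)\, ds.
\end{equation*}
Since each $B_s^{(k)}$ is built from derivatives of $H_s$, the assumption that all derivatives of $H$ vanish at $s=0$ forces $P_0^{(n)}=P_0$ and eliminates the first term; under the stronger compact-support hypothesis one furthermore has $B_1^{(k)}=0$ and hence $P_1^{(n)}=P_1$.

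The locality estimate is then carried out by pairing with a local observable $A$ and moving to the Heisenberg picture:
\begin{equation*}
\langle A\rangle_{\rho_\epsilon(1)} - \langle A\rangle_{P_1^{(n)}} = \iu\epsilon^n\int_0^1 \Tr\bigl(\tau_\epsilon^{1,s}(A)\, R_s^{(n)}\bigr)\, ds.
\end{equation*}
The Lieb-Robinson bound confines $\tau_\epsilon^{1,s}(A)$ to a spatial cone of volume $\caO(\epsilon^{-d})$ over times of order $\epsilon^{-1}$, so its pairing against the quasi-local $R_s^{(n)}$ contributes $\caO(\epsilon^{n-d})$. In the compact-support setting this is the entire error, and taking $n=d+m$, which requires $C^{d+m+1}$ regularity for $R_s^{(n)}$ to be well-defined, yields the advertised $\epsilon^m$ bound. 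Without compact support, the additional contribution $\langle A\rangle_{P_1^{(n)}}-\langle A\rangle_{P_1} = \sum_{k=1}^n \epsilon^k \langle A\rangle_{B_1^{(k)}}$ is $\caO(\epsilon)$ by quasi-locality of the $B_1^{(k)}$, so the choice $n=d+1$ with $C^{d+2}$ regularity produces the first bound.

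The main obstacle, and the technical heart of~\cite{OurMathAdiabatic}, is quantitative control of quasi-locality under the recursion: each application of $\caI_s$ effectively convolves the spatial decay profile of its argument with $W$, so the decay rate degrades at every iteration. One must therefore work in a scale of $F$-functions that is stable under this convolution and verify that the quasi-local norms of the $B_s^{(k)}$ and $R_s^{(n)}$ remain uniformly bounded in $|\Lambda|$ after finitely many steps; this is precisely where the subexponential decay of $W$ constructed in Section~\ref{sec:Locality} becomes essential.
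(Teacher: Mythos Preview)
Your additive Nenciu-type expansion $P_s^{(n)}=P_s+\sum_{k}\epsilon^k B_s^{(k)}$ is not the route the paper takes, and as written it has a genuine gap. The operators $B_s^{(k)}$ cannot be quasi-local in any of the senses of Appendix~\ref{appendix:Local}: already $B_s^{(1)}=\caI_s(\dot P_s)=-\iu[\caI_s(G_s^{\mathrm H}),P_s]$ carries the ground-state projection $P_s$, which is a global object whose norm-level estimates scale with the volume (this is exactly the infrared problem of Section~\ref{sec:MB}). The same is true of every $B_s^{(k)}$ and of the remainder $R_s^{(n)}$. More seriously, your final estimate rests on bounding the bare trace $\Tr\bigl(\tau_\epsilon^{1,s}(A)\,R_s^{(n)}\bigr)$, but a trace pairing of two bounded operators is controlled by $\dim\caH_\Lambda$, not by support sizes; quasi-locality of the factors alone gives nothing uniform in $\Lambda$. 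The same objection applies to your claim that $\Tr(A\,B_1^{(k)})$ is $\caO(1)$ ``by quasi-locality of the $B_1^{(k)}$''. What would make these traces finite is the hidden $P_s$ inside $R_s^{(n)}$ and $B_s^{(k)}$ (rank one, hence trace class), but then you must track and exploit that structure explicitly through the recursion, which you have not done and which is not straightforward at higher orders.

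The paper circumvents this by working with a \emph{unitary} dressing $\Pi_\epsilon^{(n)}=V_\epsilon^{(n)}P_s(V_\epsilon^{(n)})^{*}$ with $V_\epsilon^{(n)}=\ep{\iu S_\epsilon^{(n)}}$, $S_\epsilon^{(n)}=\sum_p\epsilon^p A_p$. The $A_p$'s are solved for via Lemma~\ref{lma:LocalInverse} from equations of the form $[\,\cdot\,,P_s]=0$, so that each $A_p$ is a genuine local Hamiltonian containing \emph{no} $P_s$. The remainder then appears as a local Hamiltonian $R_\epsilon^{(n)}$ of order $\epsilon^{n+1}$ sitting inside the commutator in~(\ref{RealvsDressed}), namely $\langle\omega_\epsilon^{(n)},[\tau_\epsilon(A),R_\epsilon^{(n)}]\,\omega_\epsilon^{(n)}\rangle$: here the trace-class structure comes from the unit vector $\omega_\epsilon^{(n)}$, and the commutator of a local observable with a local Hamiltonian is bounded by $\vert\mathrm{supp}(\tau_\epsilon(A))\vert\cdot\Vert R_\epsilon^{(n)}\Vert_F$ via the estimate in Appendix~\ref{appendix:Local}, yielding $\caO(\epsilon^{(n+1)-d})$ uniformly in $\Lambda$. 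The separation of roles---locality carried entirely by local Hamiltonians, the projection $P_s$ entering only through conjugation---is precisely what your additive scheme blurs.
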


We now navigate through the proof of the theorem. Instead of comparing the adiabatically evolved $\rho_\epsilon(s)$ to the instantaneous stationary state $P_s$, we aim to construct recursively a dressing transformation $V_\epsilon^{(n)}(s)$ for $n=1,2,\ldots$ such that
\begin{equation}\label{Pin}
\Pi_\epsilon^{(n)}(s) = V_\epsilon^{(n)}(s) P_s V_\epsilon^{(n)}(s)\str
\end{equation}
is close to $\rho_\epsilon(s)$, while its range remains $\epsilon$-close to that of $P_s$. The second requirement is immediately satisfied by choosing the Ansatz
\begin{equation}\label{Vs}
V_\epsilon^{(n)}(s) = \ep{\iu S^{(n)}_\epsilon(s)}
\end{equation}
where 
\begin{equation*}
S^{(n)}_\epsilon(s) = \sum_{p = 1}^n \epsilon^p A_p(s),
\end{equation*}
and $A_p$'s are all local Hamiltonians to be determined. In order to satisfy the first requirement, we start by noting that
\begin{multline*}
\iu\epsilon\dot \Pi_\epsilon^{(n)}(s)
= [H_s, \Pi_\epsilon^{(n)}(s)] \\
 + V_\epsilon^{(n)}(s)\left[\iu \epsilon V_\epsilon^{(n)}(s)\str \dot V_\epsilon^{(n)}(s) + \epsilon G_s^{\mathrm{H}} + (H_s - V_\epsilon^{(n)}(s)\str H_s V_\epsilon^{(n)}(s)),P_s\right]V_\epsilon^{(n)}(s)\str.
\end{multline*}
We have separated the zeroth order in $\epsilon$ of the right hand side to emphasize the proximity of $\Pi_\epsilon^{(n)}(s)$ to $\rho_\epsilon(s)$. We now choose $A_p$'s so as to cancel the second term, order by order in~$\epsilon$. To order $p=1$, we obtain the following linear equation for $A_1(s)$:
\begin{equation*}
\left[G_s^{\mathrm{H}} + \iu[A_1(s) , H_s],P_s\right] = 0.
\end{equation*}
By Lemma~\ref{lma:LocalInverse}, the choice
\begin{equation*}
A_1(s) = -\caI(G_s^{\mathrm{H}})
\end{equation*}
is a solution. Since $G_s^{\mathrm{H}}$ is a sum of almost local terms, so is $A_1(s)$ (see the discussion in Appendix~\ref{appendix:L}), thereby closing the first step of the recursion. Moreover, the assumption $\dot H\vert_{s=0} = 0$ implies $G_s^{\mathrm{H}}\vert_{s=0} = 0$ and hence $A_1(0) = 0$.

The procedure is now clear although slightly tedious. Expand
\begin{equation}\label{ExpGen}
\iu \epsilon V_\epsilon^{(n)}(s)\str \dot V_\epsilon^{(n)}(s) 
= -\epsilon\int_0^1\ep{-\iu\mu S^{(n)}_\epsilon(s)}\dot S^{(n)}_\epsilon(s)\ep{\iu\mu S^{(n)}_\epsilon(s)}d\mu = \sum_{p = 2}^{n}\epsilon^p T_p(s) + \tilde T_{n+1}(s)
\end{equation}
as well as 
\begin{equation}\label{ExpH}
H_s - V_\epsilon^{(n)}(s)\str H_s V_\epsilon^{(n)}(s) = \sum_{p = 1}^{n}\epsilon^p K_p(s) + \tilde K_{n+1}(s)
\end{equation}
and choose $A_p$ recursively to cancel out the terms of order $\epsilon^p$. We find for the first steps
\begin{equation*}
K_1(s) = \iu[A_1(s),H_s],\qquad K_2(s) = \iu[A_2(s),H] - \frac{1}{2}[A_1(s),[A_1(s),H_s]],
\end{equation*}
and
\begin{equation*}
T_2(s) = -\dot A_1(s).
\end{equation*}
The higher order terms can be expressed in terms of multi-commutators of the $A_p$'s and their derivatives. It is worthwhile to note that the equation at order $p$ involves $\{A_j,1\leq j\leq p\}$ as well as time derivatives thereof for $j<p$, and that $A_p$ appears only in $K_p$, and in fact through its commutator with the Hamiltonian. As in the case $p=1$, this special structure ensures that Lemma~\ref{lma:LocalInverse} provides a solution for $A_p$ in terms of the lower order potentials. Moreover, the set of operators which are sums of almost local terms, equipped with the sum and the product $[\cdot,\cdot]$ is an algebra, ensuring that all $A_p$'s are local Hamiltonians. It follows that the propagator $V_\epsilon^{(n)}(s)$ satisfies a Lieb-Robinson bound. This in turn implies that the rest terms $\tilde T_{n+1}(s),\tilde K_{n+1}(s)$ are themselves of the same form. 

We conclude that the construction will provide a sequence of local Hamiltonians $\{A_p\}$ and that the recursion must only stop if differentiability is lost. Indeed, the existence of $G^{\mathrm{H}}$ requires $H\in C^1$, see~(\ref{GH}), which in turn yields the existence of $A_1$. We further observe that $A_2$ depends on $\dot A_1$ and hence on $\ddot H$, and recursively $A_p$ exists only if $H\in C^p$. In order for resulting dressed projection $\Pi_\epsilon^{(n)}$ to be differentiable, all $A_j$'s must be differentiable. Hence, the construction is well-defined with a rest of order $\epsilon^{n+1}$ provided $H\in C^{n+1}$. Moreover, since all derivatives of $H$ are assumed to vanish at $s=0$, we conclude that $A_1(0) = \ldots = A_p(0) = 0$. Hence, the dressing transformation is trivial at $s=0$.

Summarizing, we have proved the following proposition (and we refer once again to the Appendix~\ref{appendix:Local} for clarifications of the notion of a local Hamiltonian):
\begin{prop}\label{prop:dressing}
Under the assumptions of Theorem~\ref{thm:MBA} with $H$ being $(n+1)$-times differentiable, there exist local Hamiltonians $A_1,\ldots A_n$ such that $\Pi_\epsilon^{(n)}$ defined by~(\ref{Pin},\ref{Vs}) solves
\begin{equation*}
\iu\epsilon \dot \Pi_\epsilon^{(n)}(s) = [H_s + R_\epsilon^{(n)}(s) ,\Pi_\epsilon^{(n)}(s)],\qquad \Pi_\epsilon^{(n)}(0) = P_0,
\end{equation*}
where $R_\epsilon^{(n)}(s)$ is a local Hamiltonian of order $\epsilon^{n+1}$. 
\end{prop}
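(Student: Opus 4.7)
The plan is to differentiate the ansatz $\Pi_\epsilon^{(n)}(s) = V_\epsilon^{(n)}(s) P_s V_\epsilon^{(n)}(s)\str$ directly, and then choose $A_1,\ldots,A_n$ order by order so that the deviation of $\iu\epsilon \dot\Pi_\epsilon^{(n)} - [H_s,\Pi_\epsilon^{(n)}]$ from a commutator $[R_\epsilon^{(n)},\Pi_\epsilon^{(n)}]$ with a small, local $R_\epsilon^{(n)}$ is forced to vanish through order $\epsilon^n$. Using~(\ref{Generator}) for $\dot P_s$ and $[H_s,P_s]=0$, the computation already displayed right after the ansatz~(\ref{Vs}) amounts to
\begin{equation*}
\iu\epsilon \dot\Pi_\epsilon^{(n)}(s) - [H_s,\Pi_\epsilon^{(n)}(s)] = V_\epsilon^{(n)}(s)\bigl[E_\epsilon^{(n)}(s),P_s\bigr]V_\epsilon^{(n)}(s)\str,
\end{equation*}
where
\begin{equation*}
E_\epsilon^{(n)}(s) = \iu\epsilon V_\epsilon^{(n)}\str\dot V_\epsilon^{(n)} + \epsilon G_s^{\mathrm{H}} + \bigl(H_s - V_\epsilon^{(n)}\str H_s V_\epsilon^{(n)}\bigr).
\end{equation*}
The goal is to arrange that, modulo pieces commuting with $P_s$, $E_\epsilon^{(n)}$ is of order $\epsilon^{n+1}$; the residue conjugated back by $V_\epsilon^{(n)}$ will then be $R_\epsilon^{(n)}$.

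First I would expand $E_\epsilon^{(n)}$ in powers of $\epsilon$, using~(\ref{ExpGen}) for the first summand and~(\ref{ExpH}) for the third, to obtain
\begin{equation*}
E_\epsilon^{(n)}(s) = \sum_{p=1}^n \epsilon^p M_p(s) + \epsilon^{n+1}\widetilde R_\epsilon^{(n)}(s),
\end{equation*}
with $M_p(s)$ a polynomial in $H_s$, $A_1(s),\ldots,A_p(s)$ and time derivatives thereof. The key structural observation, already transparent in the expressions for $K_1,K_2,T_2$ listed before the proposition, is that $A_p$ enters $M_p$ only through $\iu[A_p(s),H_s]$, while $\dot A_p$ first appears at order $p+1$. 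Hence the matching condition $[M_p(s),P_s]=0$ reduces to
\begin{equation*}
\bigl[\iu[A_p(s),H_s] + B_p(s),P_s\bigr] = 0,
\end{equation*}
where $B_p(s)$ is built from $\{A_j,\dot A_j\}_{j<p}$ and derivatives of $H_s$ up to order $p$. Lemma~\ref{lma:LocalInverse} then provides the explicit solution $A_p(s) = -\caI(B_p(s))$, and the recursion closes formally.

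The hard part, and the genuine many-body step of the argument, will be to check that each $A_p(s)$ is a local Hamiltonian with bounds uniform in $\Lambda$. This rests on three closure properties of the class of sums of almost local terms, discussed in Appendix~\ref{appendix:L} and in the paragraphs around~(\ref{GH}): (i) it is an algebra under the commutator, so $B_p$ inherits almost locality from $A_1,\ldots,A_{p-1}$; (ii) $s$-differentiation preserves it, which accounts for the smoothness loss at each step; and (iii) the operation $\caI$ preserves it, because in the representation~(\ref{GH})--(\ref{FourierW}) the subexponential decay of $W$ together with the Lieb-Robinson bound for $\ep{\iu t H_s}$ turns almost local inputs into almost local outputs. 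Since $A_p$ requires $p$ derivatives of $H$ and one further derivative is needed to make sense of $\dot\Pi_\epsilon^{(n)}$, the hypothesis $H\in C^{n+1}$ is exactly what is needed to run the recursion to order $n$.

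Finally I would verify the initial condition and assemble $R_\epsilon^{(n)}$. The assumption that all derivatives of $H$ vanish at $s=0$ propagates through the recursion: $G_0^{\mathrm{H}}=0$ gives $A_1(0)=0$, and inductively $B_p(0)=0$ forces $A_p(0)=0$, so $V_\epsilon^{(n)}(0)=\Id$ and $\Pi_\epsilon^{(n)}(0)=P_0$. With the $A_p$'s so chosen, $E_\epsilon^{(n)}$ reduces modulo $P_s$-diagonal pieces to $\epsilon^{n+1}\widetilde R_\epsilon^{(n)}(s)$, whence
\begin{equation*}
R_\epsilon^{(n)}(s) = \epsilon^{n+1}\, V_\epsilon^{(n)}(s)\,\widetilde R_\epsilon^{(n)}(s)\, V_\epsilon^{(n)}(s)\str
\end{equation*}
satisfies the equation of the proposition. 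The ingredients of $\widetilde R_\epsilon^{(n)}$, namely the tails $\tilde T_{n+1}$, $\tilde K_{n+1}$ from~(\ref{ExpGen})--(\ref{ExpH}) together with the uncancelled cross terms of order $\geq\epsilon^{n+1}$, are local Hamiltonians by (i)--(iii), and conjugation by $V_\epsilon^{(n)}$, whose generator $S_\epsilon^{(n)}$ is itself a local Hamiltonian by construction, preserves this class, producing the $R_\epsilon^{(n)}$ of order $\epsilon^{n+1}$ claimed.
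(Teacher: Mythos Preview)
Your proposal is correct and follows essentially the same route as the paper: differentiate the dressed projection, isolate the bracket $[E_\epsilon^{(n)},P_s]$, expand via~(\ref{ExpGen}) and~(\ref{ExpH}), solve recursively for $A_p$ using Lemma~\ref{lma:LocalInverse}, and verify locality, smoothness, and the initial condition along the way. Your explicit identification of $R_\epsilon^{(n)} = \epsilon^{n+1} V_\epsilon^{(n)}\widetilde R_\epsilon^{(n)} V_\epsilon^{(n)}\str$ and the remark that only the off-diagonal part of $E_\epsilon^{(n)}$ matters are slight elaborations on what the paper leaves implicit, but the argument is otherwise the same.
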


The rest of the argument yielding the adiabatic theorem~\ref{thm:MBA} is rather routine using Duhamel's formula and the Lieb-Robinson bound. We provide here a slightly different version than the original~\cite{OurMathAdiabatic} which emphasizes the relation to parallel transport.

Recall that $\psi_\epsilon(s) = U_\epsilon(s,0)\Omega_0$ is the solution of the Schr\"odinger equation, and let $\omega_\epsilon^{(n)}(s) = V_\epsilon^{(n)}(s)\Omega_s$ be the dressed ground state. Then
\begin{equation*}
\psi_\epsilon(s) - \omega_\epsilon^{(n)}(s) 
= -\left. U_\epsilon(s,r)\omega_\epsilon^{(n)}(r)\right\vert_{r=0}^{r=s},
\end{equation*}
By Proposition~\ref{prop:dressing}, $\iu\epsilon \frac{d}{ds}\omega_\epsilon^{(n)}(s) = (H_s + R_\epsilon^{(n)}(s))\omega_\epsilon^{(n)}(s)$, so that
for any $A\in\caA^{\mathrm{loc}}$
\begin{multline}\label{RealvsDressed}
\langle\psi_\epsilon(s),A\psi_\epsilon(s)\rangle - \langle\omega_\epsilon^{(n)}(s),A\omega_\epsilon^{(n)}(s)\rangle \\
=\frac{\iu}{\epsilon}\int_0^s\langle \omega_\epsilon^{(n)}(r),[U_\epsilon(s,r)\str AU_\epsilon(s,r),R_\epsilon^{(n)}(r)]\omega_\epsilon^{(n)}(r)\rangle dr.
\end{multline}
The observable $U_\epsilon(s,r)\str AU_\epsilon(s,r)$ is supported on an $\epsilon^{-1}$-fattening of the support of $A$ so that 
\begin{equation*}
\left\Vert [U_\epsilon(s,r)\str AU_\epsilon(s,r), R_\epsilon^{(n)}(r)] \right\Vert = \caO\left(\epsilon^{(n+1)-d}\right)
\end{equation*}
because $R_\epsilon^{(n)}(r)$ is a local Hamiltonian of order $\epsilon^{n+1}$. Furthermore,
\begin{equation}\label{DressedvsNaked}
\langle\omega_\epsilon^{(n)}(s),A\omega_\epsilon^{(n)}(s)\rangle - \langle\Omega_s,A\Omega_s\rangle = \iu \int_0^1\langle\Omega_s,[\ep{-\iu\mu S_\epsilon^{(n)}(s)} A \ep{\iu\mu S_\epsilon^{(n)}(s)},S_\epsilon^{(n)}(s) ]\Omega_s\rangle d\mu
\end{equation}
which is of order $\epsilon$ because $S_\epsilon^{(n)}(s)$ is a local Hamiltonian of order $\epsilon$ and the dynamics it generates satisfies a Lieb-Robinson bound. Hence, if $H\in C^k$ for $k\geq d+2$, we set $n=d+1$ in~(\ref{RealvsDressed}) and conclude with~(\ref{DressedvsNaked}) that, as claimed in the first part of Theorem~\ref{thm:MBA},
\begin{equation*}
\left\vert \langle\psi_\epsilon(s),A\psi_\epsilon(s)\rangle 
- \langle\Omega_s,A\Omega_s\rangle\right\vert \leq C(A)\epsilon,
\end{equation*}
where $C(A)$ depends on $A$ but it is independent of the volume $\Lambda$. A careful analysis of the error yields $C(A) = c \Vert A\Vert \vert \mathrm{supp}(A)\vert^2$ for an $A$-independent constant $c$.

It remains to prove the second claim under the further assumption that the driving has stopped at $s=1$. If the Hamiltonian has a compactly supported derivative, then as already discussed in the paragraph before Proposition~\ref{prop:dressing} about $s=0$, we have $A_1(1) = \cdots = A_k(1) = 0$ since they all depend locally in time on the derivatives of $H$, see~(\ref{ExpGen},\ref{ExpH}) and the definition~(\ref{I}) of $\caI$. Therefore, $S_\epsilon^{(k)}(1) = 0$ in that case, so that the dressed ground state is just the ground state itself, $\omega_\epsilon^{(k)}(1) = \Omega_1$. The diabatic error arises solely from~(\ref{RealvsDressed}), which yields at $n=d+m$ an improved order $\epsilon^{m}$, namely the second part of Theorem~\ref{thm:MBA}.

We conclude this section with two remarks. First of all, if the Hamiltonian is smooth with compactly supported derivative, then the diabatic error is beyond perturbation theory. It is argued in~\cite{OurPRL} that the error is in fact exponential as it is in the single-body case~\cite{Nenciu}, but it is dimension-dependent:
\begin{equation*}
\left\vert \langle\psi_\epsilon(s),A\psi_\epsilon(s)\rangle 
- \langle\Omega_s,A\Omega_s\rangle\right\vert \leq C(A)\ep{-\frac{c}{\epsilon^{1/d}}}. 
\end{equation*}

Secondly, we described so far the unitary $V_\epsilon^{(n)}(s)$ as a dressing transformation and thought of it as the transformation taking the instantaneous ground state $\Omega_s$ to a properly dynamically adjusted version of it. And indeed, $V_\epsilon^{(n)} = 1$ as soon as the driving stops. Alternatively, one could also see $V_\epsilon^{(n)}(s)$ as being so that the vector $V_\epsilon^{(n)}(s)\str U_\epsilon(s,0)\Omega_0$ is close to $\Omega_s$. In other words, we have engineered an additional `counter-diabatic driving' to bring the slowly driven Schr\"odinger dynamics back to pure parallel transport, see~\cite{CDD,Demirplak2008,Saberi}.

\section{Kubo's Formula}\label{sec:Kubo}

We close these notes with an important application to condensed matter physics, namely the proof of the validity of Kubo's formula for disspationless transport at zero temperature in a many-body setting.

We first recall the general setting of linear response that we will have in mind. We consider an initial unperturbed Hamiltonian $H_{\mathrm i}$ upon which a weak perturbation is adiabatically switched on:
\begin{equation*}
H_t = H_{\mathrm i} + \alpha \ep{\epsilon t} V
\end{equation*}
where $t\in(\infty,0]$ and $0<\alpha\ll 1$. Let
\begin{equation*}
P_{\mathrm i} = P_{\{\alpha = 0,t\}} = P_{\{\alpha,t=-\infty\}}
\end{equation*}
be the ground state projection of $H_{\mathrm{i}}$. The state $\rho_{\epsilon,\alpha}(t)$ solves the initial value problem
\begin{equation*}
\frac{d}{dt}\rho_{\epsilon,\alpha}(t) = -\iu[H_t, \rho_{\epsilon,\alpha}(t)],\qquad \lim_{t\to-\infty}\ep{\iu H_{\mathrm i}t}\rho_{\epsilon,\alpha}(t)\ep{-\iu H_{\mathrm i}t} = P_{\mathrm i}.
\end{equation*}
The equation is usually solved in the interaction picture $\varrho_{\epsilon,\alpha}(t) := \ep{\iu H_{\mathrm i}t}\rho_{\epsilon,\alpha}(t)\ep{-\iu H_{\mathrm i}t}$, namely
\begin{equation*}
\varrho_{\epsilon,\alpha}(t) = \varrho_{\epsilon,\alpha}(t_0) - \iu\alpha\int_{t_0}^t\ep{\epsilon \tau}
\left[\ep{\iu H_{\mathrm i}\tau}V\ep{-\iu H_{\mathrm i}\tau},\varrho_{\epsilon,\alpha}(\tau)\right] d\tau.
\end{equation*}
Letting $t=0,t_0\to-\infty$,
\begin{equation*}
\rho_{\epsilon,\alpha}(0) - P_{\mathrm i} = - \iu\alpha\int_{-\infty}^0\ep{\epsilon \tau}
\ep{\iu H_{\mathrm i}\tau}\left[V,\rho_{\epsilon,\alpha}(\tau)\right]\ep{-\iu H_{\mathrm i}\tau} d\tau
\end{equation*}
and we obtain to first order in $\alpha$
\begin{equation*}
\Tr\big(J(\rho_{\alpha,\epsilon}(0) -  P_{\mathrm i})\big) \sim 
- \iu\alpha\int_{-\infty}^0\ep{\epsilon \tau}
\Tr\big(
P_{\mathrm i}\left[\ep{-\iu H_{\mathrm i}\tau}J\ep{\iu H_{\mathrm i}\tau},V\right]
\big) d\tau
\end{equation*}
for any observable $J$. This is the celebrated Kubo formula for the linear response of $J$ under the driving $V$,
\begin{equation}\label{Kubo}
\chi^{\text{Kubo}}_{J,V} = \lim_{\epsilon\to 0^+} \iu\int_{-\infty}^0\ep{\epsilon \tau}
\Tr\big(
P_{\mathrm i}\left[V,\ep{-\iu H_{\mathrm i}\tau}J\ep{\iu H_{\mathrm i}\tau}\right]
\big) d\tau,
\end{equation}
where the right hand side depends exclusively on the unperturbed quantities.

Let $J\in\caA^{\mathrm{loc}}$ be a local current observable in a many-body setting and $V$ be an extensive local Hamiltonian. By the Lieb-Robinson bound the norm of the commutator $\Vert[V,\ep{-\iu H_{\mathrm i}\tau}J\ep{\iu H_{\mathrm i}\tau}]\Vert$ is expected to be of order $\tau^d$. It follows by scaling that the integral in~(\ref{Kubo}) is of order $\epsilon^{-(d+1)}$ as $\epsilon\to0^+$, questioning the validity of Kubo's formula and of its derivation.

Let us first clarify an important point which has remained conveniently hidden in the algebra above, namely that of the order of the three limits involved in the many-body setting: the thermodynamic limit $\vert \Lambda\vert \to\infty$, the adiabatic limit $\epsilon\to0^+$ and the linear response limit $\alpha\to0^+$. The physically correct definition of the linear response coefficient is
\begin{equation}\label{chi}
\chi_{J,V} :=\lim_{\alpha\to 0^+}\lim_{\epsilon\to 0^+}\lim_{\vert \Lambda\vert \to\infty}\frac{\Tr(J\rho_{\alpha,\epsilon}(0)) - \Tr(J P_{\mathrm i})}{\alpha},
\end{equation}
and the infinite volume being taken first jeopardizes Kubo's argument. Nonetheless, the combination of the quasi-adiabatic flow and the many-body adiabatic theorem allows for a simple proof of the existence of these limits.

For simplicity, we assume here the existence of the thermodynamic limit in the sense that $\Tr(A P_{\mathrm i})\to\varpi_{\mathrm i}(A)$ for any $A\in\caA^{\mathrm{loc}}$ as $\vert \Lambda\vert \to\infty$. Then following theorem not only claims the validity of linear response theory in the sense of the existence of $\chi_{J,V}$, it also provides a new formula for it, and finally claims that Kubo's formula is in fact well-defined even if the thermodynamic limit is taken first. 
\begin{theorem}\label{thm:LR}
Assume that $H_{\mathrm i},V$ are local quantum spin Hamiltonians and that $H_{\mathrm i} + \beta V$ is gapped in a neighbourhood of $\beta = 0$. Then
\begin{equation}\label{QAKubo}
\chi_{J,V} = -\iu \varpi_{\mathrm i}\left([\caI_{\mathrm i}(V),J]\right).
\end{equation}
Moreover, $\chi_{J,V} = \chi^{\mathrm{Kubo}}_{J,V}$.
\end{theorem}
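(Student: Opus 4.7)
The plan proceeds in two movements. First, I use the many-body adiabatic theorem to collapse the $\epsilon \to 0^+$ limit in (\ref{chi}) onto the static derivative $\partial_\alpha \Tr(J P_\alpha)\vert_{\alpha = 0}$, and evaluate the latter via the quasi-adiabatic flow to obtain (\ref{QAKubo}). Second, I match (\ref{QAKubo}) with the Kubo integral (\ref{Kubo}) by a Fourier identity that crucially uses the spectral gap.

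For the reduction, I reparametrize time by $s = \epsilon\tau$, so that the dynamics in (\ref{chi}) takes the adiabatic form $\iu \epsilon \partial_s \rho_{\alpha,\epsilon}(s) = [H_s, \rho_{\alpha,\epsilon}(s)]$ with $H_s = H_{\mathrm{i}} + \alpha \ep{s} V$ on $s \in (-\infty, 0]$ and asymptotic data $\lim_{s\to -\infty} \rho_{\alpha,\epsilon}(s) = P_{\mathrm{i}}$. Although Theorem~\ref{thm:MBA} is stated on a bounded interval, the local adiabatic expansion of Section~\ref{sec:Expansion} applies verbatim on $(-\infty, 0]$ because $\dot H_s$ and all its $s$-derivatives decay exponentially as $s \to -\infty$; the dressing operators $A_p(s)$ inherit this decay, so the error integrals in (\ref{RealvsDressed}) converge uniformly. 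Since $\dot H_s \propto \alpha$, each $A_p(s)$ is of order $\alpha$ to leading order; executing the expansion at $n = d+1$ yields, for any local observable $J$,
\begin{equation*}
\Tr(J \rho_{\alpha,\epsilon}(0)) = \Tr(J P_\alpha) + O(\alpha\,\epsilon),
\end{equation*}
where $P_\alpha$ is the ground state of $H_{\mathrm{i}} + \alpha V$. Dividing by $\alpha$ and letting $\epsilon \to 0^+$ before $\alpha \to 0^+$ reduces (\ref{chi}) to $\chi_{J,V} = \partial_\alpha \Tr(J P_\alpha)\vert_{\alpha = 0}$.

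I then evaluate this static derivative via the quasi-adiabatic flow applied not to time but to the coupling: the Hastings generator of $\alpha \mapsto P_\alpha$ is $\caI_\alpha(V)$, so (\ref{Generator}) gives $\partial_\alpha P_\alpha = -\iu[\caI_\alpha(V), P_\alpha]$. Tracing against $J$ at $\alpha = 0$ and using cyclicity yields (\ref{QAKubo}) at finite $\Lambda$. The thermodynamic limit is benign because $\caI_{\mathrm{i}}(V)$ is a sum of almost-local terms (by the subexponential decay of $W$ and the Lieb--Robinson bound), so $[\caI_{\mathrm{i}}(V), J]$ is quasi-local with $\Lambda$-uniform norm. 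For the equality with (\ref{Kubo}), I unfold the definition and use the $H_{\mathrm{i}}$-invariance of $\varpi_{\mathrm{i}}$ to write
\begin{equation*}
\varpi_{\mathrm{i}}([\caI_{\mathrm{i}}(V), J]) = \int_\bbR W(t)\,\varpi_{\mathrm{i}}([V, \ep{-\iu t H_{\mathrm{i}}} J \ep{\iu t H_{\mathrm{i}}}]) \, dt,
\end{equation*}
which has the same structure as (\ref{Kubo}) with $W(t)$ replacing the half-line kernel $\iu\mathbf{1}_{(-\infty, 0]}(t) \ep{\epsilon t}$. Spectrally decomposing on the eigenbasis of $H_{\mathrm{i}}$, the ground state expectation of $[V, \cdot]$ picks out only matrix elements at frequencies $|\omega| \geq \gamma > 0$, and on that set both Fourier transforms coincide: $\widehat W(\omega) = \iu/(\sqrt{2\pi}\omega)$ by (\ref{FourierW}), while $\lim_{\epsilon \to 0^+}\int_{-\infty}^0 \ep{(\epsilon + \iu\omega)\tau} d\tau = \iu/\omega$ by an elementary contour estimate, uniformly for $|\omega| \geq \gamma$.

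The main obstacle is the interplay of the $\vert\Lambda\vert \to \infty$ and $\epsilon \to 0^+$ limits in (\ref{Kubo}): by Lieb--Robinson, $\|[V, \ep{-\iu\tau H_{\mathrm{i}}} J \ep{\iu\tau H_{\mathrm{i}}}]\|$ grows like $|\tau|^d$ uniformly in $\Lambda$, so the two limits do not commute in the naive formulation and the Kubo integral is only conditionally convergent. This is precisely why the formula must first be established at finite $\Lambda$ via the Fourier identification above, after which the thermodynamic limit is taken on the quasi-adiabatic side (\ref{QAKubo}), where the quasi-locality of $\caI_{\mathrm{i}}(V)$ makes it manifestly well-defined and equal to $\chi_{J,V}$.
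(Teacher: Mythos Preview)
Your proof is correct, and the first movement matches the paper's own argument closely: apply the many-body adiabatic theorem to collapse the $\epsilon$-limit onto the instantaneous ground state projection $P_\alpha$, then differentiate in $\alpha$ via the quasi-adiabatic generator $\caI_{\mathrm i}(V)$, with volume-uniformity of both steps allowing the thermodynamic limit to pass through. Your explicit tracking of the $O(\alpha\epsilon)$ error and the remark on extending the expansion to $(-\infty,0]$ are refinements the paper glosses over.

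The second movement, however, takes a genuinely different route. The paper does not compare the two integral kernels in Fourier space. Instead it uses the algebraic identity $\Tr(P_{\mathrm i}[A,B]) = \Tr(P_{\mathrm i}[Q_{\mathrm i}(A),B])$ to replace $V$ by its off-diagonal part $Q_{\mathrm i}(V)$, then invokes (\ref{LInverse}) to write $Q_{\mathrm i}(V) = -\iu[H_{\mathrm i},\caI_{\mathrm i}(Q_{\mathrm i}(V))]$. This turns the Kubo integrand into an exact time derivative, which is integrated explicitly via the resolvent identity~(\ref{JCalc}), and the same projection trick is applied once more to recover~(\ref{QAKubo}). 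Your approach is the spectral dual of this: the observation that only frequencies $|\omega|\geq\gamma$ contribute to $\Tr(P_{\mathrm i}[V,J(t)])$ is exactly the content of the $Q_{\mathrm i}$-substitution, and the agreement of $\widehat W$ with the Fourier transform of the half-line Abel kernel on that set is what the paper's explicit integration expresses. Your argument is arguably more transparent and avoids the auxiliary map $Q_{\mathrm i}$; the paper's version stays in operator language throughout and makes the role of the inverse Liouvillian more explicit. Both rely on working at finite $\Lambda$ first, where the spectral decomposition is finite and the $\epsilon\to0^+$ limit is unproblematic.
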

Note that the subscript on $\caI_{\mathrm i}$ indicates that the Hamiltonian involved in the map is $H_{\mathrm i}$. The proof of the theorem is quite simple and we again only sketch it here, see~\cite{OurMathAdiabatic}. We also note that the statement can be extended from local currents to current densities, namely having $J$ being the density of an extensive observable instead of a local observable, see~\cite{TeufelAd}.

On the one hand, Theorem~\ref{thm:MBA} yields
\begin{equation*}
\lim_{\epsilon\to0^+}\rho_{\alpha,\epsilon}(0) = P_{\{\alpha,s=0\}}
\end{equation*}
as $\epsilon\to0^+$. On the other hand 
\begin{equation*}
\partial_\alpha H_{\{\alpha,s=0\}}\vert_{\alpha=0} = V
\end{equation*}
so that the quasi-adiabatic flow yields
\begin{equation*}
\lim_{\alpha\to0^+} \alpha^{-1}\left(P_{\{\alpha,s=0\}} - P_{\{\alpha=0,s=0\}}\right)
= -\iu[\caI_{\mathrm{i}}(V),P_{\{\alpha=0,s=0\}}].
\end{equation*}
Note that both convergences above are uniform in the volume in the weak topology of states, namely when traced against a local observable. This concludes the derivation of (\ref{QAKubo}).

It remains to show equality with Kubo's formula. For this, let $\Lambda$ be arbitrary but fixed. Since for any $A,B\in\caA_\Lambda$, with $Q_{\mathrm{i}}(A) = P_{\mathrm{i}} A (1-P_{\mathrm{i}}) + (1-P_{\mathrm{i}}) A P_{\mathrm{i}}$,
\begin{equation}\label{AntiComm}
\Tr(P_{\mathrm{i}}[A,B]) = \Tr(P_{\mathrm{i}}[Q_{\mathrm{i}}(A),B])
\end{equation}
by cyclicity, we substitute $Q_{\mathrm{i}}(V)$ for $V$ in the commutator of~(\ref{Kubo}), and replace it further by $-\iu[H_{\mathrm{i}}, \caI_{\mathrm{i}}(Q_{\mathrm{i}}(V))]$, see Appendix~\ref{appendix:L}. But then
\begin{equation*}
\chi^{\mathrm{Kubo}}_{J,V} = \lim_{\epsilon\to0^+}\int_0^\infty\ep{-\epsilon \tau}\Tr(P_{\mathrm{i}}[[H_{\mathrm{i}},\ep{-\iu H_{\mathrm i}\tau}  \caI_{\mathrm{i}}(Q_{\mathrm{i}}(V))\ep{\iu H_{\mathrm i}\tau}],J])d\tau,
\end{equation*}
which can be integrated explicitly 
\begin{align}
\int_0^\infty\ep{-\epsilon \tau}[H_{\mathrm{i}},\ep{-\iu H_{\mathrm i}\tau}  \caI_{\mathrm{i}}(Q_{\mathrm{i}}(V))\ep{\iu H_{\mathrm i}\tau}] d\tau
&= -\iu 
\int \frac{\lambda - \mu}{\lambda - \mu+\iu\epsilon} dP(\lambda)\caI_{\mathrm{i}}(Q_{\mathrm{i}}(V))dP(\mu) \nonumber \\
&\quad\stackrel{\epsilon\to0^+}{\longrightarrow}-\iu \caI_{\mathrm{i}}(Q_{\mathrm{i}}(V))\label{JCalc}
\end{align}
since $\vert\lambda - \mu\vert$ is bounded away from zero by the gap assumption. It remains to apply~(\ref{AntiComm}) again to obtain~(\ref{QAKubo}).

Theorem~\ref{thm:LR} assumes a spectral gap, and the corresponding linear response is non-dissipative. Indeed, if work is the change of energy under the driving, then it is easy to see that no work is performed since
\begin{equation*}
\Tr(P_{\mathrm i}[\caI_{\mathrm i}(V), H_{\mathrm i}]) 
= \Tr([H_{\mathrm i},P_{\mathrm i}]\caI_{\mathrm i}(V)) = 0. 
\end{equation*}
The setting is adapted to the quantum Hall effect, where the Lorentz force does similarly perform no work. Since the recent proof~\cite{HastingsMichalakis} of the quantization of conductance for quantum spin systems (see also~\cite{MBQHE} for a short version) assumes as a starting point the validity of linear response, Theorem~\ref{thm:LR} fills the remaining (mathematical) gap in the microscopic understanding of the quantum Hall effect for interacting systems. Note that in the context of the Hall effect where the Hamiltonian is naturally periodic in two variables, the linear response coefficient is usually expressed yet differently, namely as the adiabatic curvature
\begin{equation*}
\iu \Tr(P_{\mathrm i}\:d P_{\mathrm i}\wedge dP_{\mathrm i})
\end{equation*}
of the bundle of ground state projections. The equality of the average of this expression over the torus with Kubo's formula was understood in~\cite{AvronSeiler85}. Another derivation which does not require averaging can be found in~\cite{EquivLR} in this volume. A short history of the problem of quantization of Hall conductance with interactions is in~\cite{YosiIAMP}.

\subsection*{Acknowledgements}

The authors would like to thank Y.~Avron, S.~Teufel and D.~Monaco for discussions.

\appendix
\section{Appendix}

\subsection{A zoo of `local' operators}\label{appendix:Local}

Very briefly, let us recall the general setting of quantum spin systems. It allows for an infinite volume description, although these lecture notes are set on arbitrary but finite volumes. We consider a countable set $\Gamma$ equipped with a metric $d(\cdot,\cdot)$ and a function $F:[0,\infty)\to(0,\infty)$ such that
\begin{equation*}
\Vert F\Vert_1:=\sup_{x\in\Gamma}\sum_{y\in\Gamma}F(d(x,y))<\infty,\qquad
C_F:=\sup_{x,z\in\Gamma}\sum_{y\in\Gamma} \frac{F(d(x,y))F(d(y,z))}{F(d(x,z))}<\infty.
\end{equation*}
Let $\caF_\Gamma$ be the set of finite subsets of $\Gamma$. A finite dimensional complex Hilbert space $\caH_x$ is attached to each $x\in\Gamma$. If $\Lambda\in\caF_\Gamma$, then $\caH_\Lambda := \otimes_{x\in\Lambda}\caH_x$. Let now
\begin{equation*}
\caA_\Lambda :=\caB(\caH_\Lambda),\qquad 
\caA^{\mathrm{loc}} := \cup_{\Lambda\in\caF_\Gamma}\caA_\Lambda,\qquad
\caA := \overline{\caA^{\mathrm{loc}}}^{\Vert\cdot\Vert}
\end{equation*}
be respectively the algebra of observables on $\Lambda$, the algebra of local observables and the algebra of quasi-local observables. In other words, $A\in\caA^{\mathrm{loc}}$ means that there is a finite set $\Lambda$ such that $A\in\caA_\Lambda$, while $A\in\caA$ means that $A$ can be approximated in norm by a sequence of such local observables. The support of $A\in\caA^{\mathrm{loc}}$ is the smallest set $Z$ such that $A\in\caA_Z$. Note that $A\in\caA_\Lambda$ for all $\Lambda\supset \mathrm{supp}(A)$ by tensoring with $1$ on $\Lambda\setminus Z$. 

Now, a local Hamiltonian is a family of operators labelled by $\Lambda\in\caF_\Gamma$ of the form
\begin{equation*}
H_\Lambda = \sum_{Z\subset\Lambda}\Phi(Z),\qquad\Phi(Z)= \Phi(Z)\str\in\caA_Z,
\end{equation*}
such that the interaction potential $\Phi$ has sufficient decay, namely
\begin{equation*}
\Vert\Phi\Vert_F:=\sup_{x,y\in\Gamma}\frac{1}{F(d(x,y))}\sum_{Z\ni \{x,y\}}\Vert \Phi(Z)\Vert<\infty
\end{equation*}
Although this is meaningful (and defines a norm) for any $F$ introduced above, the adiabatic theorem requires $F$ to decay faster than any inverse power, in the sense that $\sup\{(1+r^k) F(r):r\in[0,\infty)\}<\infty$ for all $k\in\bbN$.

When we say that a local Hamiltonian is of order $\epsilon$, we mean that it can be associated with a potential whose $F$-norm is of that order. The use of this local norm is essential since a local Hamiltonian is not a local operator, but rather an extensive one since
\begin{equation*}
\Vert H_\Lambda \Vert \leq \vert \Lambda\vert \Vert F\Vert_1 \Vert \Phi\Vert_F.
\end{equation*}
Its locality becomes apparent through the following commutator estimate which was used many times in these notes,
\begin{equation*}
\Vert [H_\Lambda,A]\Vert \leq 2\Vert A\Vert \vert \mathrm{supp}(A)\vert \Vert F\Vert_1 \Vert\Phi\Vert_F
\end{equation*}
where the bound depends in particular on the support of $A$, but not on $ \Lambda $.

Finally, this can be bootstrapped to yield the Lieb-Robinson bound for the dynamics generated by a local Hamiltonian. For any local observables $A,B\in\caA^{\mathrm{loc}}$ with disjoint supports separated by $d>0$, there is $v>0$ such that
\begin{equation*}
\left\Vert[\exp(\iu t H_\Lambda) A \exp(-\iu t H_\Lambda),B]\right\Vert\leq C(A,B)\exp(-\mu(d - v\vert t\vert))
\end{equation*}
for a $\mu>0$, and the bound is independent of $\Lambda$. As a consequence, local observables are mapped into almost local ones and more generally, the algebra of almost local observables is invariant under the dynamics, see \eg Appendix~C of~\cite{bachmann2016lieb}. An observable $A$ is called almost local if there is a $Z\in\caF_\Gamma$ and a sequence $A_n\in\caA_{Z^n}$ (where $Z_n = \{x\in\Gamma:d(x,Z)\leq n\}$) such that
\begin{equation*}
\Vert A - A_n \Vert \leq C_k \Vert A \Vert \mathrm{supp}(A) n^{-k}
\end{equation*}
for all $k\in\bbN$. In other words, almost local observables are quasi-local observables whose finite volume approximations converge very rapidly.

\subsection{On the inverse of $-\iu[H,\cdot]$}\label{appendix:L}

Let $H$ be a self-adjoint operator on a finite dimensional Hilbert space $\caH$ whose spectrum lies within two disjoint intervals $I_1, I_2$ with $\mathrm{dist}(I_1, I_2)=\gamma>0$. The rank-one operators $\vert \psi_k\rangle\langle\psi_l\vert$, where $\{\psi_j:j\in\{1,\ldots,\mathrm{dim}(\caH)\}\}$ is an eigenbasis of $H$, are eigenvectors of
\begin{equation*}
L = -\iu[H,\cdot]
\end{equation*}
for the eigenvalue $-\iu(\lambda_k - \lambda_l)$. 

Let $P = \chi_{I_1}(H)$ be the spectral projection of $H$ associated with $I_1$, and let $Q$ be the Banach space projection
\begin{equation*}
Q(A) = (1-P) A P + P A (1-P)
\end{equation*}
onto off-diagonal matrices with respect to the orthogonal decomposition
\begin{equation*}
\caH = P\caH\oplus (1-P)\caH.
\end{equation*}
Since $L(\mathrm{Ran}(Q))\subset\mathrm{Ran}(Q)$, the restriction $L\upharpoonright_{\mathrm{Ran}(Q)}$ is well-defined, its spectrum is
\begin{equation*}
\mathrm{Spec}(\iu L\upharpoonright_{\mathrm{Ran}(Q)})\subset\bbR\setminus(-\gamma,\gamma),
\end{equation*}
and it is invertible with bounded inverse. Since $H$ has a gap, the limit
\begin{equation}\label{J}
\caJ(A):=\lim_{\eta\to 0^+}\int_{0}^{\infty}\ep{-\eta t}\ep{\iu t H} A \ep{-\iu t H} dt
\end{equation}
exists for all $A\in \mathrm{Ran}(Q)$ and satisfies
\begin{equation}\label{LInverse}
A = -\iu[H,\caJ(A)],
\end{equation}
see~(\ref{JCalc}), with $H\to-H$.

If $\caH$ is now the Hilbert space of a quantum spin system and $H$ is a local Hamiltonian, then it generates a local dynamics by the Lieb-Robinson bound. It is therefore natural to look for an inverse that does so, too. However, since the integral in~(\ref{J}) extends to infinity as the cutoff is removed $\eta\to 0^+$, $\caJ(A)$ is in general supported on the whole system even if $A$ was strictly local. 

With $W$ as in Section~\ref{sec:Locality}, 
\begin{equation}\label{I}
\caI(A):=\int_{-\infty}^{\infty}W(t)\ep{\iu t H} A \ep{-\iu t H} dt
\end{equation}
also satisfies~(\ref{LInverse}) on $\mathrm{Ran}(Q)$ by~(\ref{FourierW}). By the Lieb-Robinson bound for small $t$ and the fast decay of $W$ at infinity, the algebra of almost local observables is invariant under $\caI$. It further follows that if $H$ is a local Hamiltonian, then so is $\caI(H)$, see Lemma~4.8 of~\cite{OurMathAdiabatic}. Finally,

\begin{lemma}\label{lma:LocalInverse}
For any $A\in\caB(\caH)$,
\begin{equation*}
[A,P] = -\iu\big[[H,\caI(A)],P\big].
\end{equation*}
\end{lemma}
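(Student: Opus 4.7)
My strategy is to decompose $A$ according to the block structure induced by $P$ and $1-P$, and to reduce the claim to the already established inversion identity (\ref{LInverse}), which applies only on $\mathrm{Ran}(Q)$. Writing $A = Q(A) + (1-Q)(A)$, with $(1-Q)(A) = PAP + (1-P)A(1-P)$ block-diagonal and $Q(A)$ off-diagonal, I aim to show that both sides of the asserted identity depend only on the off-diagonal piece $Q(A)$.

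The left-hand side is the easy half: any block-diagonal operator commutes with $P$, so $[A,P] = [Q(A),P]$ immediately. For the right-hand side, the crucial point is that $\caI$ preserves the block decomposition, or equivalently $\caI$ commutes with $Q$. This should follow directly from the integral representation (\ref{I}) combined with $[H,P]=0$: the Heisenberg automorphism $e^{\iu tH}(\cdot)e^{-\iu tH}$ commutes with the maps $B\mapsto PBP$ and $B\mapsto (1-P)B(1-P)$, and integrating against $W(t)$ inherits this property. Consequently $\caI((1-Q)(A))$ is block-diagonal, hence so is $[H,\caI((1-Q)(A))]$ since $H$ itself is block-diagonal with respect to $P$, and taking a further commutator with $P$ then gives zero.

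It then only remains to compute $-\iu\bigl[[H,\caI(Q(A))],P\bigr]$, and here I would invoke (\ref{LInverse}) to replace $-\iu[H,\caI(Q(A))]$ by $Q(A)$ directly, yielding $-\iu[[H,\caI(Q(A))],P] = [Q(A),P] = [A,P]$ by the first reduction. I expect no serious obstacle in carrying this out: the only step that requires a moment's thought is the stability of the block decomposition under $\caI$, which is really just a bookkeeping observation resting on $[H,P]=0$ and the interchange of the block projection with the $W$-weighted time integral.
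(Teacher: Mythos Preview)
Your argument is correct. Both your proof and the paper's reduce the claim to the inversion identity~(\ref{LInverse}) on $\mathrm{Ran}(Q)$, but the reductions differ. You decompose $A=Q(A)+(1-Q)(A)$, argue that $\caI$ preserves the block structure (since $[H,P]=0$), discard the diagonal contribution from both sides, and then apply~(\ref{LInverse}) to $Q(A)$. The paper instead applies Jacobi's identity to the nested commutator: since $[H,P]=0$, one has $-\iu\bigl[[H,\caI(A)],P\bigr]=-\iu\bigl[H,[\caI(A),P]\bigr]=-\iu\bigl[H,\caI([A,P])\bigr]$, and then~(\ref{LInverse}) is applied directly to $[A,P]=(1-P)AP-PA(1-P)\in\mathrm{Ran}(Q)$. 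The paper's route is shorter and never decomposes $A$; your route is more explicit about why the diagonal part is irrelevant and makes the role of $[H,P]=0$ in the commutation of $\caI$ with the block projection transparent. Both are entirely sound.
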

\begin{proof}
By Jacobi's identity, this follows immediately from~(\ref{LInverse}) with $\caI$ instead of $\caJ$, since $[A,P] = (1-P)AP - PA(1-P)\in\mathrm{Ran}(Q)$.
\end{proof}

\bibliography{LocAd}
\bibliographystyle{amsplain}

\end{document}